\newtheorem{theorem}{Theorem}[section]
\newtheorem{definition}[theorem]{Definition}
\newtheorem{lemma}[theorem]{Lemma}
\newtheorem{remark}[theorem]{Remark}
\newtheorem{example}[theorem]{Example}
\newcommand{\qedsymb}{\hfill{\rule{2mm}{2mm}}}
\renewenvironment{proof}[1][]{\begin{trivlist}
\item[\hspace{\labelsep}{\bf\noindent Proof#1:\/}] }{\qedsymb\end{trivlist}}
\def\calG{{\cal G}}
\def\N{\mathbb{N}}
\newcommand{\NP}{\mathsf{NP}}
\newcommand{\coNPpoly}{\mathsf{coNP/poly}}
\newcommand{\eps}{\epsilon}
\renewcommand{\epsilon}{\varepsilon}
\newcommand{\Fset}{\mathbb{F}}         
\newcommand{\qCol}{\textsc{-Coloring}}
\newcommand{\Empty}{\textsc{Empty}}
\newcommand{\IndEdge}{\textsc{Independent Edge}}
\begin{document}

\title{{\bf A Near-Optimal Kernel for a Coloring Problem}}

\author{
Ishay Haviv\thanks{School of Computer Science, The Academic College of Tel Aviv-Yaffo, Tel Aviv 61083, Israel. Research supported by the Israel Science Foundation (grant No.~1218/20).}
\and
Dror Rabinovich\footnotemark[1]
}

\date{}

\maketitle

\begin{abstract}
For a fixed integer $q$, the $q\qCol$ problem asks to decide if a given graph has a vertex coloring with $q$ colors such that no two adjacent vertices receive the same color. In a series of papers, it has been shown that for every $q \geq 3$, the $q\qCol$ problem parameterized by the vertex cover number $k$ admits a kernel of bit-size $\widetilde{O}(k^{q-1})$, but admits no kernel of bit-size $O(k^{q-1-\eps})$ for $\eps >0$ unless $\NP \subseteq \coNPpoly$ (Jansen and Kratsch, 2013; Jansen and Pieterse, 2019).
In 2020, Schalken proposed the question of the kernelizability of the $q\qCol$ problem parameterized by the number $k$ of vertices whose removal results in a disjoint union of edges and isolated vertices. He proved that for every $q \geq 3$, the problem admits a kernel of bit-size $\widetilde{O}(k^{2q-2})$, but admits no kernel of bit-size $O(k^{2q-3-\eps})$ for $\eps >0$ unless $\NP \subseteq \coNPpoly$. He further proved that for $q \in \{3,4\}$ the problem admits a near-optimal kernel of bit-size $\widetilde{O}(k^{2q-3})$ and asked whether such a kernel is achievable for all integers $q \geq 3$. In this short paper, we settle this question in the affirmative.
\end{abstract}

\section{Introduction}

Graph coloring is a fundamental concept in graph theory that has garnered significant attention from a computational perspective.
In particular, coloring problems have been thoroughly investigated within the realm of parameterized complexity, with respect to a range of parameterizations that capture diverse aspects of the instances (see, e.g.,~\cite{Cai03,ChorFJ04,FialaGK11}).
For a fixed integer $q$, let $q\qCol$ denote the problem that given an input graph $G$ asks to decide if $G$ is $q$-colorable, that is, if there exists a vertex coloring of $G$ with $q$ colors such that no two adjacent vertices receive the same color. Such a coloring is called a proper $q$-coloring of $G$.
It is well known that the $q\qCol$ problem is solvable in polynomial time for $q \in \{1,2\}$ and is $\NP$-complete for every $q \geq 3$.
The present paper focuses on structural parameterizations of the problem that are characterized by the distance of the input graph from a prescribed graph family $\calG$.
Specifically, borrowing a terminology of Cai~\cite{Cai03}, we consider the $q\qCol$ problem on $\calG+k\mathrm{v}$ graphs, defined as follows.
\begin{center}
\begin{tabular}{l}
Input: A graph $G=(V,E)$ and a set $X \subseteq V$ such that $G \setminus X \in \calG$. \\
Question: Is $G$ $q$-colorable? \\
Parameter: The size $k=|X|$ of the set $X$.
\end{tabular}
\end{center}
Note that $G \setminus X$ stands for the graph obtained from $G$ by removing the vertices of $X$.

A primary theme in parameterized complexity is kernelization, also known as data reduction, where the objective is to efficiently transform a given instance of a parameterized problem into an equivalent instance, whose size is bounded in terms of the involved parameter (see, e.g.,~\cite{KernelBook19}). Formally, a kernel for a parameterized problem $Q \subseteq \{0,1\}^* \times \N$ is an algorithm that given an instance $(x,k)$, where $k$ is the parameter, runs in time polynomial in $|x|+k$ and returns an instance $(x',k')$, such that $(x,k) \in Q$ if and only if $(x',k') \in Q$, with $|x'|+k' \leq f(k)$ for some computable function $f$. The function $f$ is referred to as the bit-size of the kernel.

The kernelizability of the $q\qCol$ problems on $\calG+k\mathrm{v}$ graphs was systematically studied in 2013 by Jansen and Kratsch~\cite{JansenK13}, who provided a hierarchical classification of graph families $\calG$ according to whether the problems admit a kernel and whether the kernel is of polynomial size (see also~\cite[Chapter~7]{JansenThesis}). In particular, for the family $\Empty$ of edgeless graphs, which corresponds to the well-studied parameterization by the vertex cover number, they proved that for every $q \geq 3$, the $q\qCol$ problem on $\Empty+k\mathrm{v}$ graphs admits a kernel with $O(k^q)$ vertices and bit-size $O(k^q)$. This result was improved in 2019 by Jansen and Pieterse~\cite{JansenP19color}, who proved that the problem admits a kernel with $O(k^{q-1})$ vertices and bit-size $O(k^{q-1} \cdot \log k)$. On the other hand, it was shown in~\cite{JansenK13,JansenP19color} that for every integer $q \geq 3$ and for any $\eps >0$, there is no kernel for the $q\qCol$ problem on $\Empty+k\mathrm{v}$ graphs with bit-size $O(k^{q-1-\eps})$ unless $\NP \subseteq \coNPpoly$. Since this containment would imply the unlikely collapse of the polynomial-time hierarchy~\cite{Yap83}, these results resolve the kernelization complexity of the $q\qCol$ problem on $\Empty+k\mathrm{v}$ graphs up to a $k^{o(1)}$ multiplicative term for all integers $q \geq 3$.

In 2020, Schalken~\cite{Schalken20} investigated the kernelizability of the $q\qCol$ problems with respect to parameters that are bounded above by the vertex cover number.
Specifically, he considered the $q\qCol$ problems on $\IndEdge+k\mathrm{v}$ graphs, where $\IndEdge$ denotes the family of all graphs that form a disjoint union of edges and isolated vertices or, equivalently, the graphs whose maximum degree does not exceed $1$.
It was proved in~\cite{Schalken20} that for every integer $q \geq 3$, the $q\qCol$ problem on $\IndEdge+k\mathrm{v}$ graphs admits a kernel with $O(k^{2q-2})$ vertices and bit-size $O(k^{2q-2})$, whereas for any $\eps>0$, it admits no kernel of bit-size $O(k^{2q-3-\eps})$ unless $\NP \subseteq \coNPpoly$. While these results leave a multiplicative gap of roughly $k$ between the upper and lower bounds, it was further shown in~\cite{Schalken20} that for $q \in \{3,4\}$, the problem admits a kernel with $O(k^{2q-3})$ vertices and bit-size $O(k^{2q-3} \cdot \log k)$. The question of whether such a kernel exists for integers $q \geq 5$ was left open in~\cite{Schalken20}. The present paper resolves this question affirmatively, thereby providing a near-optimal kernel for the $q\qCol$ problem on $\IndEdge+k\mathrm{v}$ graphs for all integers $q \geq 3$, as stated below.

\begin{theorem}\label{thm:main}
For every integer $q \geq 3$, the $q\qCol$ problem on $\IndEdge+k\mathrm{v}$ graphs admits a kernel with $O(k^{2q-3})$ vertices and bit-size $O(k^{2q-3} \cdot \log k)$.
\end{theorem}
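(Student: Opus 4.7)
The plan is to refine the polynomial-method kernelization of Schalken~\cite{Schalken20}, saving an additional factor of $k$ in the bound on the number of edge-pair gadgets. First I preprocess: every isolated vertex $v$ of $G \setminus X$ can be assumed to satisfy $|N(v) \cap X| \leq q - 1$, and the Jansen-Pieterse kernel for the $q\qCol$ problem on $\Empty + k\mathrm{v}$ graphs~\cite{JansenP19color} reduces the number of such isolated vertices to $O(k^{q-1})$, each described by $O(\log k)$ bits. Moreover, for every edge $\{u, v\}$ of $G \setminus X$, if one endpoint---say $u$---has $|N(u) \cap X| \leq q - 2$, then under any coloring $\phi$ of $X$ vertex $u$ has at least two free colors, so the pair constraint collapses into the isolated-vertex constraint $\phi(N(v) \cap X) \neq [q]$ for $v$. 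Thus I may assume every remaining edge $\{u,v\}$ has $|N(u) \cap X| = |N(v) \cap X| = q - 1$.

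For such an edge, write $N_u = N(u) \cap X$ and $N_v = N(v) \cap X$. The property that the pair extends to a proper coloring decomposes into three conjuncts: (a) $\phi(N_u) \neq [q]$; (b) $\phi(N_v) \neq [q]$; and (c) there is no color $c \in [q]$ with $\phi(N_u) = \phi(N_v) = [q] \setminus \{c\}$. Conjuncts (a) and (b) are subsumed by the isolated-vertex constraints from preprocessing. Conjunct (c) is the genuinely pair-specific constraint, which I plan to encode as a polynomial in the coloring-indicator variables $x_{w,d} = \mathbf{1}[\phi(w) = d]$ via inclusion-exclusion over the shared missing color. The central technical claim is that the family of such polynomials, indexed by ordered pairs $(N_u, N_v)$ of $(q-1)$-subsets of $X$, spans a subspace of dimension $O(k^{2q-3})$; a basis argument in the style of~\cite{JansenP19color} then retains $O(k^{2q-3})$ essential pair gadgets, each realized by two new vertices of degree $q-1$ into $X$, for a total bit-size of $O(k^{2q-3} \cdot \log k)$.

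The main obstacle is the dimension bound $O(k^{2q-3})$. The naive bound, which treats the two endpoints independently, is $O(k^{q-1}) \cdot O(k^{q-1}) = O(k^{2q-2})$, matching~\cite{Schalken20}. The saving of a factor of $k$ must come from the structural coupling in~(c): both endpoints are forced to miss the \emph{same} color. My plan is to write the indicator of~(c) as $\sum_{c \in [q]} \mathbf{1}[\phi(N_u) = [q] \setminus \{c\}] \cdot \mathbf{1}[\phi(N_v) = [q] \setminus \{c\}]$, expand each factor via inclusion-exclusion, and show that the resulting polynomial lies in the span of ``coupled'' basis elements indexed, for example, by ordered pairs of $(q-1)$-subsets of $X$ with non-empty intersection, whose count is $\binom{k}{q-1}\cdot (q-1)\cdot \binom{k-1}{q-2} = O(k^{2q-3})$. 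Carrying out this algebraic argument without incurring hidden factors, and converting the resulting basis into an explicit kernelization reduction rule, is the core technical challenge of the proof.
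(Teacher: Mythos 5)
The core technical challenge---getting from $O(k^{2q-2})$ down to $O(k^{2q-3})$---is precisely the step you leave as ``the core technical challenge of the proof,'' so the proposal has a genuine gap. The paper does not obtain the saving by a refined spanning argument over indicator variables; it obtains it by \emph{changing the encoding of colors} so that the relevant polynomial simply has degree $2q-3$ from the start, after which the standard Jansen--Pieterse dimension bound applies verbatim. Concretely, colors are identified with the columns of a $q$-palette $C \subseteq \Fset^q$: a set of $q$ vectors with all-ones first coordinates that are linearly independent, \emph{and} such that any $q-1$ of them remain linearly independent after deleting the last coordinate. That last property is the crux. The event ``$\phi(S_1)$ has a repeat'' for a $(q-1)$-set $S_1$ can then be expressed by the determinant of a $(q-1)\times(q-1)$ submatrix of the assigned color vectors---after substituting the constant $1$'s in the first row, this is a polynomial $g$ of degree only $q-2$, not $q-1$. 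The event ``$\phi(S_1)$ and $\phi(S_2)$ use the same $q-1$ colors'' is captured by a degree-$(q-1)$ determinant $f$ in which the $q$th column is $c - \sum_{w\in S_1} x_w$ (with $c$ the sum of all palette vectors), exploiting that $S_1$ rainbow implies the missing color is determined linearly. The product $h = g \cdot f$ has degree $(q-2)+(q-1)=2q-3$, and all the edge-constraint polynomials live in the space of degree-$\leq 2q-3$ polynomials in $qk$ variables, of dimension $O(k^{2q-3})$. Your proposal instead keeps the indicator encoding $x_{w,d} = \mathbf{1}[\phi(w)=d]$, under which the natural edge polynomial $\sum_{c\in[q]} \mathbf{1}[\phi(N_u)=[q]\setminus\{c\}]\cdot \mathbf{1}[\phi(N_v)=[q]\setminus\{c\}]$ has degree $2q-2$. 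You then conjecture that the span of this family has dimension $O(k^{2q-3})$ via ``coupled'' basis elements indexed by intersecting pairs of $(q-1)$-sets, but you provide no argument that these elements span the polynomials attached to \emph{disjoint} pairs $(N_u,N_v)$, of which there are $\Theta(k^{2q-2})$. Without such an argument the dimension bound does not follow, and the saving by a factor of $k$ is not established.

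Two smaller remarks. First, your preprocessing rule (discard an endpoint with at most $q-2$ neighbors in $X$, replacing the edge constraint by a single isolated-vertex constraint on the other endpoint) is correct, but it is unnecessary in the paper's framework: Schalken's structural lemma handles all $(q-1)$-subsets of neighbor sets directly, and the $O(k^{q-1})$ isolated-vertex term is dominated by $O(k^{2q-3})$ for $q\geq 3$. Second, the off-the-shelf Jansen--Pieterse kernel for $\Empty + k\mathrm{v}$ replaces $G\setminus X$ by a new instance rather than pruning the old one, so composing it with your edge-pair stage would need care; the paper avoids this by running a single unified sparsification on both the isolated-vertex and edge constraints.
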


The proof of Theorem~\ref{thm:main} employs a powerful sparsification technique that was introduced by Jansen and Pieterse~\cite{JansenP19sparse} for kernels of constraint satisfaction problems and found various applications since (see, e.g.,~\cite{ChenJP20,HR24,JansenP19color,JansenW24}).
In a nutshell, their method involves expressing combinatorial constraints associated with a given instance as polynomials equated to zero and reducing the number of constraints by considering only a subset of these polynomials, which linearly spans all the others. The smaller the degree of the polynomials used, the smaller the resulting kernel size is.
Based on this machinery, it was shown in~\cite{Schalken20} that in order to establish a near-optimal kernel for the $q\qCol$ problem on $\IndEdge+k\mathrm{v}$ graphs, it suffices to construct a polynomial of degree $2q-3$ that expresses some specified constraints on a collection of colors, encoded as vectors of $\Fset^q$ for a field $\Fset$. The polynomials needed for $q \in \{3,4\}$ were found in~\cite{Schalken20} through a computer search. In the present work, we provide a rigorous construction of the desired polynomials for all integers $q \geq 3$, allowing us to establish Theorem~\ref{thm:main}.

We conclude by noting that the main idea behind our application of the sparsification technique of~\cite{JansenP19sparse} lies in encoding colors as vectors that satisfy certain linear-algebraic constraints, enabling us to capture combinatorial coloring properties via low-degree polynomials. It would be interesting to explore whether this approach can be applied more broadly. Possible directions include the development of improved kernels for the $q\qCol$ problem under alternative parameterizations, or for other problems of a similar nature.

The remainder of this short paper is organized as follows.
In Section~\ref{sec:palette}, we introduce the notion of a $q$-palette over a field $\Fset$ as a set of $q$ vectors in $\Fset^q$ that exhibit some specified properties and are used to encode colors. After determining the integers $q$ and fields $\Fset$ for which a $q$-palette over $\Fset$ exists, we use this object to construct low-degree polynomials that meet the constraints identified in~\cite{Schalken20}.
Then, in Section~\ref{sec:final}, we combine these polynomials with the approach of~\cite{Schalken20} to confirm Theorem~\ref{thm:main}.

\section{Color Palettes and Low-degree Polynomials}\label{sec:palette}

In this section, we construct the low-degree polynomials needed for our kernelization result.
The polynomials are defined on variables representing colors from a prescribed set, where the colors are encoded as vectors over some field $\Fset$.
We refer to such a set of vectors as a palette over $\Fset$ and require its members to satisfy the conditions outlined in the following definition.

\begin{definition}\label{def:palette}
For an integer $q \geq 2$ and a field $\Fset$, a {\em $q$-palette over $\Fset$} is a set $C = \{c_1, \ldots,c_q\}$ of $q$ vectors of $\Fset^q$ satisfying that
\begin{enumerate}
  \item\label{itm:palette1} for each $i \in [q]$, the first entry of $c_i$ is $1$,
  \item\label{itm:palette2} $C$ is linearly independent over $\Fset$, and
  \item\label{itm:palette3} for every set $S \subseteq [q]$ of size $|S|=q-1$, the vectors of $\Fset^{q-1}$ obtained from the vectors $c_i$ with $i \in S$ by omitting their last entry are linearly independent over $\Fset$.
\end{enumerate}
For a $q$-palette $C$ over $\Fset$ and for an integer $m$, a matrix $M = (a_1, \ldots, a_m) \in \Fset^{q \times m}$ is said to be {\em $C$-colored} if every column of $M$ is a member of $C$, that is, $a_i \in C$ for all $i \in [m]$.
\end{definition}

The following lemma offers sufficient and necessary conditions for the existence of a $q$-palette over a field $\Fset$.

\begin{lemma}\label{lemma:palette_exists}
For every integer $q \geq 2$ and for every field $\Fset$, there exists a $q$-palette over $\Fset$ if and only if $q$ is even or $|\Fset|\geq 3$.
\end{lemma}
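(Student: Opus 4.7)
The plan is to recast the existence of a $q$-palette as a simple arithmetic condition on $\Fset$ and then verify it by a short case analysis. I would identify a $q$-palette with the $q \times q$ matrix $M$ whose columns are its members, and let $N \in \Fset^{(q-1) \times q}$ denote its top $q-1$ rows. The palette conditions translate to: (i) the first row of $M$ is the all-ones vector $\mathbf{1}$; (ii) $M$ is invertible; (iii) every $q-1$ columns of $N$ are linearly independent. Given $N$ fulfilling (i) and (iii), the row span of $N$ has dimension $q-1$, so one can always append a last row outside this span to obtain an invertible $M$; hence the task reduces to constructing an $N$ satisfying (i) and (iii).

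The key step will be the equivalence: such an $N$ exists if and only if there are $u_1, \ldots, u_q \in \Fset \setminus \{0\}$ with $\sum_i u_i = 0$. For the forward direction, (iii) forces $N$ to have rank $q-1$, so $\ker N$ is one-dimensional; let $u$ span it. If some $u_i$ were zero, the remaining $q-1$ columns of $N$ would admit the nontrivial dependence certified by $u$, violating (iii); hence $u$ has all nonzero entries, and reading off the first coordinate of $Nu=0$ yields $\sum_i u_i = 0$. For the reverse direction, given such $u$, the hyperplane $W = \{x \in \Fset^q : \sum_i u_i x_i = 0\}$ has dimension $q-1$ and contains $\mathbf{1}$ (because $\sum_i u_i = 0$); one can then pick a basis of $W$ starting with $\mathbf{1}$ and let $N$ be the matrix having these basis vectors as its rows. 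Then $\ker N = \linspan\{u\}$, and since $u$ has no zero entry, every $q-1$ columns of $N$ are linearly independent.

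The remaining task, which I expect to be the main obstacle in the write-up, is to determine exactly when such $u \in (\Fset \setminus \{0\})^q$ with $\sum_i u_i = 0$ exists. For $q$ even, I would take $q/2$ entries equal to $1$ and the remaining $q/2$ equal to $-1$ (in characteristic $2$ this collapses to all ones, with sum $q \equiv 0$). For $q$ odd and $|\Fset| = 2$, every $u_i$ must equal $1$, forcing $\sum_i u_i = q \neq 0$ and ruling out existence. For $q$ odd and $|\Fset| \geq 3$, I would first handle $q = 3$ by picking some $a \in \Fset \setminus \{0, -1\}$ (nonempty since $|\Fset| \geq 3$, noting that characteristic $2$ forces $|\Fset| \geq 4$) and taking $(1, a, -(1+a))$; for larger odd $q$, I would append $(q-3)/2$ pairs of the form $(b, -b)$ with some $b \in \Fset \setminus \{0\}$. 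The subtlety here is that the clean ``half $+1$, half $-1$'' recipe for even $q$ does not adapt to odd $q$, and small or characteristic-$2$ fields force the small ad hoc base-case adjustments described above.
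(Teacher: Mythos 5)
Your proof is correct, and it takes a genuinely different and cleaner route than the paper's. The paper proves existence by exhibiting an explicit $q\times q$ matrix (essentially upper triangular with a last column tweaked by an element $\alpha\notin\{0,4-q\}$) and verifying the three palette conditions directly, and proves non-existence over $\Fset_2$ for odd $q$ by a separate ad hoc parity argument on linear combinations. You instead reduce the whole problem to a sharp criterion: a $q$-palette over $\Fset$ exists if and only if there is a vector $u\in(\Fset\setminus\{0\})^q$ with $\sum_i u_i = 0$. The reduction from $M$ to its top block $N$ is sound (rank of $N$ must be $q-1$, and any row outside the row space of $N$ completes it to an invertible $M$), the kernel argument correctly translates condition (iii) of Definition~\ref{def:palette} into the all-nonzero requirement on $u$, and the converse via the hyperplane $\{x : \sum_i u_i x_i = 0\}$ and the duality $(\linspan\{u\}^\perp)^\perp = \linspan\{u\}$ is valid over any field. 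Your case analysis for when $u$ exists is complete, and the point you flag about characteristic $2$ forcing $|\Fset|\geq 4$ is correct (field orders are prime powers). What this buys: the parity obstruction becomes conceptually transparent (both existence and non-existence fall out of a single equivalence), and the small constructions of $u$ are simpler than the paper's matrix. What the paper's approach buys: a fully explicit palette matrix with no appeal to Gaussian elimination or basis extension, which is slightly more concrete for downstream use.
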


\begin{proof}
Let $q \geq 2$ be an integer and $\Fset$ be a field.
We first show that if $q$ is even or $|\Fset| \geq 3$, then there exists a $q$-palette over $\Fset$.
Let $\alpha \in \Fset$ be a field element such that $\alpha \notin \{0, 4-q\}$, where $4$ stands for the sum of the identity element of $\Fset$ with itself four times. Such an element clearly exists when $|\Fset| \geq 3$. Such an element also exists when $\Fset$ is the binary field, provided that $q$ is even (because in this case $4-q=0$).

For each $i \in [q]$, let $e_i \in \Fset^q$ denote the vector with $1$ in the $i$th entry and $0$ elsewhere.
We define $q$ vectors $c_1, c_2, \ldots, c_q$ in $\Fset^q$ as follows: $c_1 = e_1$, $c_i = e_1+e_i$ for each $2 \leq i \leq q-1$, and $c_q = e_1+\alpha \cdot e_2 + \sum_{i=3}^{q}{e_i}$. Namely,
\[
(c_1, c_2, \ldots, c_q) =
\begin{pmatrix}
1 & 1 & 1 & \cdots & 1 & 1 \\
0 & 1 & 0 & \cdots & 0 & \alpha \\
0 & 0 & 1 & \cdots & 0 & 1 \\
\vdots & \vdots & \vdots & \ddots & \vdots & \vdots \\
0 & 0 & 0 & \cdots & 1 & 1 \\
0 & 0 & 0 & \cdots & 0 & 1 \\
\end{pmatrix}_{q \times q}.
\]
We claim that the set $C = \{c_1, \ldots, c_q\} \subseteq \Fset^q$ forms a $q$-palette over $\Fset$.
To this end, let us verify the three requirements of Definition~\ref{def:palette}.
Firstly, by definition, the first entry of $c_i$ is $1$ for each $i \in [q]$, as required for Item~\ref{itm:palette1}.
Secondly, the $q \times q$ matrix $(c_1,\ldots,c_q)$ is an upper triangular matrix with nonzero elements on the diagonal, hence $C$ is linearly independent over $\Fset$, as required for Item~\ref{itm:palette2}.
Thirdly, for each $i \in [q]$, let $d_i \in \Fset^{q-1}$ denote the vector obtained from $c_i$ by omitting its last entry. We shall prove that any $q-1$ of the vectors $d_1, \ldots, d_{q}$ are linearly independent. Consider the $(q-1) \times (q-1)$ matrix $D = (d_1, \ldots, d_{q-1})$.
The columns of $D$ are linearly independent, because it is an upper triangular matrix with ones on the diagonal.
It thus remains to show that $d_q$ cannot be represented as a linear combination of fewer than $q-1$ of the columns of $D$.
Indeed, it is not difficult to verify that the only solution of the linear system $D \cdot x = d_q$ for $x \in \Fset^{q-1}$ is $x = (4-q-\alpha,\alpha, 1, \ldots, 1)^t$.
By our choice of $\alpha$, all the entries of $x$ are nonzero, so we are done.

We next show that the stated conditions are necessary for the existence of a $q$-palette over $\Fset$, namely, for every odd integer $q \geq 3$, there is no $q$-palette over the binary field $\Fset_2$.
Suppose for contradiction that there exists a $q$-palette $C = \{c_1, \ldots, c_q\}$ over $\Fset_2$ for an odd $q \geq 3$.
For each $i \in [q]$, let $d_i \in \Fset_2^{q-1}$ denote the vector obtained from $c_i$ by omitting its last entry.
Since $C$ is a $q$-palette over $\Fset_2$, by Item~\ref{itm:palette3} of Definition~\ref{def:palette}, the vectors $d_1, \ldots, d_{q-1}$ are linearly independent. It follows that they span the entire vector space $\Fset_2^{q-1}$, so in particular, the vector $d_q$ forms a linear combination of them. We claim that in this linear combination, at least one of the coefficients must be zero. Indeed, by Item~\ref{itm:palette1} of Definition~\ref{def:palette}, the first entry of each of the vectors $d_1, \ldots, d_{q-1}$ is $1$. Hence, if all the coefficients are $1$, then the first entry of the linear combination is $q-1$, which is $0$ over $\Fset_2$ for an odd $q$, whereas the first entry of $d_q$ is $1$. We obtain that $d_q$ is a linear combination of fewer than $q-1$ of the vectors $d_1, \ldots, d_{q-1}$, hence $q-1$ of the vectors $d_1, \ldots, d_q$ are linearly dependent, in contradiction to Item~\ref{itm:palette3} of Definition~\ref{def:palette}. This completes the proof.
\end{proof}

\begin{remark}
We provide here another construction of a $q$-palette over a field $\Fset$, applicable when $|\Fset| \geq q$.
Under this assumption, there exist $q$ distinct field elements $\alpha_1, \ldots, \alpha_q \in \Fset$.
For each $i \in [q]$, consider the column vector $c_i = (1,\alpha_i,\alpha_i^2, \ldots, \alpha_i^{q-1})^t \in \Fset^q$.
We claim that the set $C = \{c_1, \ldots, c_q\}$ forms a $q$-palette over $\Fset$.
First, by definition, the first entry of $c_i$ is $1$ for each $i \in [q]$, as required for Item~\ref{itm:palette1} of Definition~\ref{def:palette}.
Next, the vectors of the set $C$ are distinct columns of a $q \times q$ Vandermonde matrix, hence $C$ is linearly independent over $\Fset$, as required for Item~\ref{itm:palette2}.
Finally, for every set $S \subseteq [q]$ of size $|S|=q-1$, omitting the last entry of the vectors $c_i$ with $i \in S$ results in distinct columns of a $(q-1) \times (q-1)$ Vandermonde matrix, which are linearly independent over $\Fset$, as required for Item~\ref{itm:palette3}.
\end{remark}

We next provide constructions of low-degree polynomials that, when evaluated on the entries of a $C$-colored matrix $M$ for a given palette $C$, determine whether the columns of $M$ fulfill certain prescribed conditions. To this end, we will repeatedly use, for an integer $q$ and a field $\Fset$, the determinant polynomial $\det: \Fset^{q \times q} \rightarrow \Fset$. This polynomial, which is defined on $q^2$ variables corresponding to the entries of a $q \times q$ matrix, maps any matrix $M \in \Fset^{q \times q}$ to its determinant $\det(M)$. Note that the polynomial $\det$ is a linear combination of $q!$ monomials, with each monomial being a product of $q$ variables, one taken from each row of the matrix. As is well known, a matrix $M \in \Fset^{q \times q}$ satisfies $\det(M)=0$ if and only if its columns are linearly dependent over $\Fset$.

We start by constructing a polynomial of degree $q-1$ that determines whether $q$ given vectors of a $q$-palette include two identical vectors.
\begin{lemma}\label{lemma:poly_deg_q-1}
For every integer $q \geq 2$ and for every $q$-palette $C$ over a field $\Fset$, there exists a polynomial $f:\Fset^{q \times q} \rightarrow \Fset$ of degree $q-1$, defined on $q^2$ variables corresponding to the entries of a $q \times q$ matrix, such that for every $C$-colored matrix $M = (a_1, \ldots, a_q) \in \Fset^{q \times q}$, it holds that $f(M)=0$ if and only if there exist distinct $i,j \in [q]$ such that $a_i = a_j$.
\end{lemma}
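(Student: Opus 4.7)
The plan is to build $f$ essentially as the determinant of $M$, but rewritten to shave one degree off by exploiting Item~\ref{itm:palette1}. Let $N(M)$ be the $(q-1) \times (q-1)$ matrix whose $(i,j)$-entry equals $M_{i+1,j+1} - M_{i+1,1}$ for $i,j \in [q-1]$, and set $f(M) = \det(N(M))$. Since each entry of $N(M)$ is a polynomial of degree one in the variables of $M$, and the standard determinant on $(q-1)\times (q-1)$ matrices has degree $q-1$, the composition $f$ has degree at most $q-1$ on the $q^2$ entries of $M$, as required.

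Next I would verify the identity $f(M) = \det(M)$ for every matrix $M \in \Fset^{q \times q}$ whose first row equals $(1,1,\ldots,1)$, which by Item~\ref{itm:palette1} holds for every $C$-colored $M$. The check is a one-line column reduction: subtracting the first column from each of the others leaves $\det(M)$ unchanged but turns the first row into $(1,0,\ldots,0)$, after which Laplace expansion along this row yields precisely $\det(N(M))$.

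With this identity in hand, the claim reduces to the elementary fact that, on a $C$-colored matrix, $\det(M)=0$ if and only if two columns coincide. Indeed, if the columns $a_1, \ldots, a_q$ of $M$ are pairwise distinct, then because $|C|=q$ their underlying set equals $C$; by Item~\ref{itm:palette2} they are linearly independent over $\Fset$, so $\det(M) \neq 0$. Conversely, if $a_i = a_j$ for some distinct $i,j$, then $M$ has two equal columns and $\det(M) = 0$. Combining these two observations with $f(M) = \det(M)$ delivers the stated equivalence.

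I do not foresee any serious obstacle here: the only idea needed is that the uniform first coordinate $1$ guaranteed by Item~\ref{itm:palette1} creates a degenerate row after passing to column differences, and this is precisely the mechanism that drops the degree from $q$ (the naive determinant) down to $q-1$.
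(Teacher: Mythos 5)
Your proposal is correct and takes essentially the same approach as the paper: both exploit Item~\ref{itm:palette1} (the constant first row) to reduce the $q \times q$ determinant to a degree-$(q-1)$ polynomial, and both then invoke Item~\ref{itm:palette2} to equate vanishing of $\det(M)$ with a repeated column. The only cosmetic difference is that the paper obtains $f$ by substituting $1$ for the first-row variables of $\det$, while you realize the same polynomial as a $(q-1)\times(q-1)$ determinant via the standard column-reduction identity.
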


\begin{proof}
For an integer $q \geq 2$ and a field $\Fset$, let $C$ be a $q$-palette over $\Fset$, and consider the determinant polynomial $\det : \Fset^{q \times q} \rightarrow \Fset$ defined on $q \times q$ matrices over $\Fset$.
Let $f: \Fset^{q \times q} \rightarrow \Fset$ be the polynomial obtained from $\det$ by substituting $1$ for the variables that correspond to the first row of the matrix (see Example~\ref{ex:q-1}).
Recall that each monomial of $\det$ is a product of $q$ variables, one taken from each row of the matrix. This implies that the degree of $f$ is $q-1$.

Now, let $M = (a_1, \ldots, a_q) \in \Fset^{q \times q}$ be a $C$-colored matrix.
The columns of $M$ are members of $C$, hence by Item~\ref{itm:palette1} of Definition~\ref{def:palette}, the first entry of each of them is $1$. This implies that $f(M)=\det(M)$, hence $f(M)=0$ if and only if the columns of $M$ are linearly dependent. By Item~\ref{itm:palette2} of Definition~\ref{def:palette}, the $q$ vectors of $C$ are linearly independent. It thus follows that the columns of $M$ are linearly dependent if and only if two of them are identical.
We conclude that $f(M)=0$ if and only if there exist distinct $i,j \in [q]$ such that $a_i = a_j$, as desired.
\end{proof}

\begin{example}\label{ex:q-1}
For $q=4$, the proof of Lemma~\ref{lemma:poly_deg_q-1} produces the polynomial $f:\Fset^{4 \times 4} \rightarrow \Fset$ defined by
\[ f(x_1,x_2,x_3,x_4) = \det \begin{pmatrix}
1 & 1 & 1 & 1 \\
(x_{1})_2 & (x_{2})_2 & (x_{3})_2 & (x_{4})_2 \\
(x_{1})_3 & (x_{2})_3 & (x_{3})_3 & (x_{4})_3 \\
(x_{1})_4 & (x_{2})_4 & (x_{3})_4 & (x_{4})_4 \\
\end{pmatrix},\]
where $(x_i)_j$ stands for the $j$th entry of the variable vector $x_i$.
Note that the degree of $f$ is $3$.
\end{example}

We proceed by constructing a polynomial of degree $q-2$ that determines whether $q-1$ given vectors of a $q$-palette include two identical vectors.
\begin{lemma}\label{lemma:poly_deg_q-2}
For every integer $q \geq 2$ and for every $q$-palette $C$ over a field $\Fset$, there exists a polynomial $g:\Fset^{q \times (q-1)} \rightarrow \Fset$ of degree $q-2$, defined on $q \cdot (q-1)$ variables corresponding to the entries of a $q \times (q-1)$ matrix, such that for every $C$-colored matrix $M =(a_1, \ldots, a_{q-1})\in \Fset^{q \times (q-1)}$, it holds that $g(M)=0$ if and only if there exist distinct $i,j \in [q-1]$ such that $a_i = a_j$.
\end{lemma}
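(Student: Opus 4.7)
The plan is to mimic the proof of Lemma~\ref{lemma:poly_deg_q-1}, replacing the use of Item~\ref{itm:palette2} of Definition~\ref{def:palette} (global linear independence of $C$) with Item~\ref{itm:palette3} (linear independence of any $q-1$ palette vectors after dropping their last entry). Since we now work with $q-1$ columns and have the ``free'' first row of $1$'s, we should be able to save one degree relative to a direct determinant expression, landing at degree $q-2$.

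Concretely, I would proceed as follows. Given a $q \times (q-1)$ variable matrix, let $M''$ be the $(q-1) \times (q-1)$ submatrix obtained by deleting the \emph{last} row, and let $g$ be the polynomial obtained from $\det(M'')$ by substituting $1$ for each variable in the first row of $M''$. Since every monomial of the $(q-1) \times (q-1)$ determinant polynomial is a product of $q-1$ variables, one from each row, the substitution for the first row yields a polynomial of degree $q-2$ in the remaining variables, as required.

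For the correctness, fix a $C$-colored matrix $M = (a_1, \ldots, a_{q-1}) \in \Fset^{q \times (q-1)}$, and for each $i \in [q-1]$ let $d_i \in \Fset^{q-1}$ denote the vector obtained from $a_i$ by omitting its last entry. Because each $a_i \in C$ has first entry $1$ by Item~\ref{itm:palette1}, the substitution in the definition of $g$ is consistent with the matrix, so $g(M) = \det(d_1, \ldots, d_{q-1})$. If all $a_i$ are distinct, then $\{d_1, \ldots, d_{q-1}\}$ consists of $q-1$ vectors obtained from $q-1$ distinct members of $C$ by omitting the last entry, which by Item~\ref{itm:palette3} of Definition~\ref{def:palette} are linearly independent, so $g(M) \neq 0$. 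Conversely, if $a_i = a_j$ for some distinct $i,j \in [q-1]$, then $d_i = d_j$, the matrix $(d_1, \ldots, d_{q-1})$ has a repeated column, and $g(M) = 0$.

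There is no serious obstacle here: the only point worth flagging is that Item~\ref{itm:palette3} was engineered precisely to make this argument work, and the degree savings come for free from the fact that Item~\ref{itm:palette1} makes the first row constant under a $C$-coloring, so we can substitute it out before taking the determinant. The rest is a verbatim adaptation of Lemma~\ref{lemma:poly_deg_q-1}.
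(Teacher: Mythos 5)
Your proposal is correct and takes essentially the same approach as the paper: both drop the last row, take the determinant of the resulting $(q-1)\times(q-1)$ submatrix with $1$'s substituted for the first-row variables (yielding degree $q-2$, justified since every column of a $C$-colored matrix has first entry $1$ by Item~\ref{itm:palette1}), and then invoke Item~\ref{itm:palette3} of Definition~\ref{def:palette} to conclude that the determinant vanishes exactly when two of the $q-1$ columns coincide.
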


\begin{proof}
For an integer $q \geq 2$ and a field $\Fset$, let $C$ be a $q$-palette over $\Fset$, and consider the determinant polynomial $\det : \Fset^{(q-1) \times (q-1)} \rightarrow \Fset$ defined on $(q-1) \times (q-1)$ matrices over $\Fset$. Let $g: \Fset^{q \times (q-1)} \rightarrow \Fset$ be the polynomial obtained by applying $\det$ to the first $q-1$ rows of the matrix and by substituting $1$ for the variables that correspond to its first row (see Example~\ref{ex:q-2}).
Since each monomial of $\det$ is a product of $q-1$ variables, one taken from each row, it follows that the degree of $g$ is $q-2$.

Now, let $M = (a_1, \ldots, a_{q-1}) \in \Fset^{q \times (q-1)}$ be a $C$-colored matrix, and let $M' \in \Fset^{(q-1) \times (q-1)}$ denote the restriction of $M$ to its first $q-1$ rows.
The columns of $M$ are members of $C$, hence by Item~\ref{itm:palette1} of Definition~\ref{def:palette}, the first entry of each of them is $1$.
This implies that $g(M)=0$ if and only if $\det(M')=0$.
By Item~\ref{itm:palette3} of Definition~\ref{def:palette}, every $q-1$ vectors of $C$, restricted to their first $q-1$ entries, are linearly independent. It thus follows that $\det(M')=0$ if and only if two of the columns of $M$ are identical.
We conclude that $g(M)=0$ if and only if there exist distinct $i,j \in [q-1]$ such that $a_i = a_j$, as desired.
\end{proof}

\begin{example}\label{ex:q-2}
For $q=4$, the proof of Lemma~\ref{lemma:poly_deg_q-2} produces the polynomial $g:\Fset^{4 \times 3} \rightarrow \Fset$ defined by
\[ g(x_1,x_2,x_3) = \det \begin{pmatrix}
1 & 1 & 1 \\
(x_{1})_2 & (x_{2})_2 & (x_{3})_2 \\
(x_{1})_3 & (x_{2})_3 & (x_{3})_3 \\
\end{pmatrix}.\]
Note that the degree of $g$ is $2$.
\end{example}

With Lemmas~\ref{lemma:poly_deg_q-1} and~\ref{lemma:poly_deg_q-2} in place, we turn to the construction of a polynomial of degree $2q-3$ that, given two sequences of $q-1$ vectors from a $q$-palette, determines whether the first sequence contains two identical vectors or the two sequences consist of distinct vectors.

\begin{lemma}\label{lemma:poly_deg_2q-3}
For every integer $q \geq 2$ and for every $q$-palette $C$ over a field $\Fset$, there exists a polynomial $h:\Fset^{q \times (2q-2)} \rightarrow \Fset$ of degree $2q-3$, defined on $q \cdot (2q-2)$ variables corresponding to the entries of a $q \times (2q-2)$ matrix, such that for every $C$-colored matrix $M =(a_1, \ldots, a_{q-1}, b_1, \ldots,b_{q-1})\in \Fset^{q \times (2q-2)}$, it holds that $h(M)=0$ if and only if
\begin{enumerate}
  \item there exist distinct $i,j \in [q-1]$ such that $a_i = a_j$, or
  \item $\{a_1,\ldots,a_{q-1}\} \neq \{b_1, \ldots, b_{q-1}\}$.
\end{enumerate}
\end{lemma}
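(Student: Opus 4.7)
The plan is to define $h$ as a product of the polynomial $g$ from Lemma~\ref{lemma:poly_deg_q-2} and the polynomial $f$ from Lemma~\ref{lemma:poly_deg_q-1}, with a carefully chosen linear-in-$a$ vector substituted into one column slot of $f$. Let $s := \sum_{c \in C} c \in \Fset^q$ denote the sum of all palette vectors, and set $w := s - \sum_{i=1}^{q-1} a_i$. The key observation is that when $a_1,\ldots,a_{q-1}$ are distinct members of $C$, one has $\sum_i a_i = s - c^*$, where $c^* \in C$ is the unique palette vector absent from $\{a_1,\ldots,a_{q-1}\}$, so $w = c^* \in C$.

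I would then define $h(M) := g(a_1,\ldots,a_{q-1}) \cdot f(b_1,\ldots,b_{q-1},w)$. Since $w$ depends linearly on the entries of the $a_i$'s, substituting it into one column slot of $f$ keeps $f(b_1,\ldots,b_{q-1},w)$ of total degree $q-1$; multiplying by $g$ of degree $q-2$ yields the required degree $2q-3$.

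For correctness, $h(M)=0$ iff $g(a_1,\ldots,a_{q-1})=0$ or $f(b_1,\ldots,b_{q-1},w)=0$. By Lemma~\ref{lemma:poly_deg_q-2}, the former is equivalent to the existence of distinct $i,j\in[q-1]$ with $a_i=a_j$, which is condition~1. Otherwise the $a_i$'s are distinct and $w = c^* \in C$, so every column of $[b_1\mid\cdots\mid b_{q-1}\mid w]$ lies in $C$; by Item~\ref{itm:palette2} of Definition~\ref{def:palette} the palette is linearly independent, and thus $f(b_1,\ldots,b_{q-1},w)$ (the determinant of this matrix with first row replaced by $1$'s) vanishes iff two of its columns coincide. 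This happens exactly when either some $b_i=b_j$ for distinct $i,j\in[q-1]$, or some $b_j=c^*$. Each sub-case implies $\{a_1,\ldots,a_{q-1}\}\neq\{b_1,\ldots,b_{q-1}\}$: the first gives $|\{b_j\}|<q-1=|\{a_i\}|$, and the second places $c^*\notin\{a_i\}$ into $\{b_j\}$. Conversely, if $\{a_i\}\neq\{b_j\}$, either the $b_j$'s are not all distinct, or $\{b_j\}$ is a $(q-1)$-element subset of $C$ distinct from $\{a_i\}$ and must therefore contain the unique missing element $c^*$. This yields condition~2.

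I do not foresee any significant obstacle: the single creative step is identifying the substitution $w = s - \sum a_i$, which simultaneously becomes the missing palette vector when the $a_i$'s are distinct and stays linear in $a$, after which the proof collapses to a direct application of the two preceding lemmas together with an elementary set-membership argument.
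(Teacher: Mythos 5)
Your proposal is correct and takes essentially the same approach as the paper: you define $h = g \cdot f$ with the identical substitution $w = \sum_{c\in C}c - \sum_{i=1}^{q-1}a_i$ in the last column slot of $f$, observe that $w$ becomes the unique missing palette vector when the $a_i$ are distinct, and derive correctness by the same case split via Lemmas~\ref{lemma:poly_deg_q-1} and~\ref{lemma:poly_deg_q-2}. The only stylistic difference is that the paper justifies the degree claim more explicitly (expanding the determinant along the substituted column to show the degree of $f$ is preserved, not merely bounded), whereas you assert it; this is a minor presentational gap, not a mathematical one.
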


\begin{proof}
For an integer $q \geq 2$ and a field $\Fset$, let $C = \{c_1, \ldots, c_q\}$ be a $q$-palette over $\Fset$.
By Lemma~\ref{lemma:poly_deg_q-1}, there exists a polynomial $f:\Fset^{q \times q} \rightarrow \Fset$ of degree $q-1$, such that for every $C$-colored matrix $M = (a_1, \ldots, a_q) \in \Fset^{q \times q}$, it holds that $f(M)=0$ if and only if there exist distinct $i,j \in [q]$ such that $a_i = a_j$.
By Lemma~\ref{lemma:poly_deg_q-2}, there exists a polynomial $g:\Fset^{q \times (q-1)} \rightarrow \Fset$ of degree $q-2$, such that for every $C$-colored matrix $M =(a_1, \ldots, a_{q-1})\in \Fset^{q \times (q-1)}$, it holds that $g(M)=0$ if and only if there exist distinct $i,j \in [q-1]$ such that $a_i = a_j$.
Let $c= \sum_{i=1}^{q}{c_i}$ denote the sum of the vectors of $C$, and define a polynomial $h:\Fset^{q \times (2q-2)} \rightarrow \Fset$ by
\[h(x_1, \ldots, x_{q-1},y_1, \ldots, y_{q-1}) = g(x_1,\ldots,x_{q-1}) \cdot f(y_1, \ldots, y_{q-1},c-\sum_{i=1}^{q-1}{x_i})\]
(see Example~\ref{ex:2q-3}).
The above substitution of linear terms into the polynomial $f$ clearly does not increase its degree. In fact, it follows from the definition of $f$ in Lemma~\ref{lemma:poly_deg_q-1} that this substitution preserves the degree of $f$, as evidenced by expanding the determinant given in $f$ along the matrix's last column (see Example~\ref{ex:2q-3} for an illustration).
We thus obtain that $h$ is a product of two polynomials of degrees $q-2$ and $q-1$, hence its degree is $2q-3$.

We now prove that the polynomial $h$ satisfies the assertion of the lemma.
Consider a $C$-colored matrix $M = (a_1, \ldots, a_{q-1}, b_1, \ldots, b_{q-1}) \in \Fset^{q \times (2q-2)}$, and put $a_q = c-\sum_{i=1}^{q-1}{a_i}$.
We shall prove that $h(M)=0$ if and only if either there exist distinct $i,j \in [q-1]$ such that $a_i = a_j$, or it holds that $\{a_1,\ldots,a_{q-1}\} \neq \{b_1, \ldots, b_{q-1}\}$.

Suppose first that $h(M)=0$. By the definition of $h$, it follows that $g(a_1, \ldots, a_{q-1}) = 0$ or $f(b_1, \ldots, b_{q-1},a_q)=0$. If the former condition holds, then by Lemma~\ref{lemma:poly_deg_q-2}, there exist distinct $i,j \in [q-1]$ such that $a_i = a_j$, and we are done.
Otherwise, using Lemma~\ref{lemma:poly_deg_q-2} again, the vectors $a_1, \ldots, a_{q-1}$ are pairwise distinct, hence $a_q$ is the unique vector of $C$ that does not lie in $\{a_1, \ldots,a_{q-1}\}$, and it holds that $f(b_1, \ldots, b_{q-1},a_q)=0$.
By Lemma~\ref{lemma:poly_deg_q-1}, it follows that two of the vectors $b_1, \ldots, b_{q-1}, a_q$ are equal. This implies that either $b_i=b_j$ for some distinct $i,j \in [q-1]$, or for some $i \in [q-1]$ it holds that $b_i = a_q$ and thus $b_i \notin \{a_1, \ldots,a_{q-1}\}$. In both cases, it holds that $\{a_1, \ldots,a_{q-1}\} \neq \{b_1, \ldots, b_{q-1}\}$, completing the argument.

For the converse direction, suppose that $h(M) \neq 0$, which by the definition of $h$ yields that $g(a_1, \ldots, a_{q-1}) \neq 0$ and $f(b_1, \ldots, b_{q-1},a_q) \neq 0$.
The first inequality implies, using Lemma~\ref{lemma:poly_deg_q-2}, that the vectors $a_1, \ldots, a_{q-1}$ are pairwise distinct, and thus, $a_q$ is the unique vector of $C$ that does not lie in $\{a_1, \ldots,a_{q-1}\}$. The second inequality implies, using Lemma~\ref{lemma:poly_deg_q-1}, that the vectors $b_1, \ldots, b_{q-1}, a_q$ are pairwise distinct.
This implies that $\{a_1, \ldots, a_{q-1}\} = \{b_1, \ldots, b_{q-1}\}$, hence the two conditions of the lemma do not hold, as required.
\end{proof}

\begin{example}\label{ex:2q-3}
For $q=4$, the proof of Lemma~\ref{lemma:poly_deg_2q-3} produces the polynomial $h:\Fset^{4 \times 6} \rightarrow \Fset$ defined by
\begin{eqnarray*}
\lefteqn{h(x_1,x_2,x_3,y_1,y_2,y_3) = g(x_1,x_2,x_3) \cdot f(y_1,y_2,y_3, c - (x_1 + x_2 + x_3))} \\
&=& \det \begin{pmatrix}
1 & 1 & 1 \\
(x_1)_2 & (x_2)_2 & (x_3)_2 \\
(x_1)_3 & (x_2)_3 & (x_3)_3 \\
\end{pmatrix}
\cdot
\det \begin{pmatrix}
1 & 1 & 1 & 1 \\
(y_1)_2 & (y_2)_2 & (y_3)_2 & (c - x_1 - x_2 - x_3)_2 \\
(y_1)_3 & (y_2)_3 & (y_3)_3 & (c - x_1 - x_2 - x_3)_3 \\
(y_1)_4 & (y_2)_4 & (y_3)_4 & (c - x_1 - x_2 - x_3)_4 \\
\end{pmatrix},
\end{eqnarray*}
where $f$ and $g$ are the polynomials given for $q=4$ in Lemmas~\ref{lemma:poly_deg_q-1} and~\ref{lemma:poly_deg_q-2} respectively, and $c$ is the sum of the vectors of the given $4$-palette over $\Fset$.
Note that the degree of $h$ is $5$.
\end{example}

\section{Proof of Theorem~\ref{thm:main}}\label{sec:final}

In this section, we establish Theorem~\ref{thm:main} by combining the polynomials constructed in the previous section with the approach of~\cite{Schalken20}.
We provide the full proof here primarily for completeness.
Let us begin with the following lemma, which resembles a statement in~\cite{Schalken20}.
Here, for a graph $G=(V,E)$, let $N_G(v)$ denote the set of neighbors of a vertex $v \in V$ in $G$, and let $G[X]$ denote the subgraph of $G$ induced by a set $X \subseteq V$.

\begin{lemma}[{\cite[Lemma~7]{Schalken20}}]\label{lemma:color_G[X]}
Let $G=(V,E)$ be a graph, and let $X \subseteq V$ be a set of vertices for which $G \setminus X$ is a disjoint union of edges and isolated vertices.
For every integer $q$, the graph $G$ is $q$-colorable if and only if there exists a proper $q$-coloring $c:X \rightarrow [q]$ of $G[X]$ such that
\begin{enumerate}
  \item\label{itm:1_lemma} for every vertex $v \in V \setminus X$ and for every set $S \subseteq X$ of size $|S|=q$ with $S \subseteq N_G(v)$, there exist distinct vertices $z_1,z_2 \in S$ such that $c(z_1)=c(z_2)$, and
  \item\label{itm:2_lemma} for every edge $\{u_1,u_2\}$ of $G \setminus X$ and for every two sets $S_1,S_2 \subseteq X$ of size $|S_1|=|S_2|=q-1$ with $S_1 \subseteq N_G(u_1)$ and $S_2 \subseteq N_G(u_2)$, either there exist distinct vertices $z_1,z_2 \in S_1$ such that $c(z_1)=c(z_2)$, or $\{c(z) \mid z \in S_1\} \neq \{c(z) \mid z \in S_2\}$.
\end{enumerate}
\end{lemma}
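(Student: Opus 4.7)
The plan is to prove the equivalence by treating the two directions separately.

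For the forward direction, I would start from a proper $q$-coloring $\widetilde{c}$ of $G$ and set $c=\widetilde{c}|_X$. This restriction is automatically a proper $q$-coloring of $G[X]$. For item~1, if some $v \in V\setminus X$ had a set $S\subseteq N_G(v)\cap X$ of size $q$ on which $c$ was injective, then $\{c(z)\mid z\in S\}=[q]$ would forbid every color for $v$, contradicting $\widetilde{c}(v)\in[q]$. For item~2, suppose $\{u_1,u_2\}$ is an edge of $G\setminus X$ and that $S_1,S_2$ violate both alternatives; then $S_1$ receives $q-1$ pairwise distinct colors and $\{c(z)\mid z\in S_1\}=\{c(z)\mid z\in S_2\}$, forcing both $u_1$ and $u_2$ to take the unique remaining color of $[q]$, contradicting the edge $\{u_1,u_2\}$.

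For the backward direction, I would extend the given $c:X\to[q]$ to all of $V$ by processing each connected component of $G\setminus X$ independently; this is legitimate since every such component has at most two vertices and distinct components share no vertices. For an isolated vertex $v$ of $G\setminus X$, item~1 implies $|c(N_G(v)\cap X)|\le q-1$, so some color of $[q]$ remains available and can be assigned to $v$. For an edge $\{u_1,u_2\}$ of $G\setminus X$, let $A_i=[q]\setminus c(N_G(u_i)\cap X)$; by item~1 applied to $v=u_i$, both $A_1$ and $A_2$ are nonempty, and if $|A_1|\ge 2$ or $|A_2|\ge 2$ a greedy choice supplies distinct colors for $u_1$ and $u_2$.

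The main obstacle will be the remaining case $|A_1|=|A_2|=1$ with $A_1=A_2=\{\alpha\}$, where item~2 becomes indispensable. In this situation $N_G(u_i)\cap X$ uses each color of $[q]\setminus\{\alpha\}$ at least once, so one can select $S_i\subseteq N_G(u_i)\cap X$ of size $q-1$ receiving pairwise distinct colors with $\{c(z)\mid z\in S_i\}=[q]\setminus\{\alpha\}$ for both $i\in\{1,2\}$. Both alternatives of item~2 then fail simultaneously, contradicting the hypothesis and ruling out this bad case. Combining the extensions across all isolated vertices and all edges of $G\setminus X$ produces the desired proper $q$-coloring of $G$, closing the backward direction and thereby the lemma.
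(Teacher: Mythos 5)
Your proof is correct and follows essentially the same route as the paper: restrict a proper coloring of $G$ to $X$ for the forward direction, and for the backward direction extend $c$ component by component of $G\setminus X$, using item~1 to guarantee a free color for each vertex and item~2 to rule out the case where both endpoints of an edge have the same unique free color. The only cosmetic point is that you do not explicitly dispose of the trivial sub-case $|A_1|=|A_2|=1$ with $A_1\neq A_2$, but this clearly poses no difficulty.
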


\begin{proof}
To prove the forward direction of the lemma, assume that the graph $G$ is $q$-colorable, and consider a proper $q$-coloring $c:V \rightarrow [q]$ of its vertices.
The restriction of $c$ to the vertices of $X$ is clearly a proper $q$-coloring of $G[X]$. We claim that it satisfies the two conditions of the lemma.
For the first, consider a vertex $v \in V \setminus X$ and a set $S \subseteq X$ of size $|S|=q$ with $S \subseteq N_G(v)$. Since the coloring $c$ is proper, the color $c(v)$ differs from the colors of all the vertices of $S$. By $|S|=q$, it follows that two distinct vertices of $S$ are assigned by $c$ the same color, as required.
For the second condition, consider an edge $\{u_1,u_2\}$ of $G \setminus X$ and two sets $S_1,S_2 \subseteq X$ of size $|S_1|=|S_2|=q-1$ with $S_1 \subseteq N_G(u_1)$ and $S_2 \subseteq N_G(u_2)$. If two vertices of $S_1$ are assigned by $c$ the same color, then we are done. Otherwise, the vertices of $S_1$ are assigned $q-1$ distinct colors, so $c(u_1)$ is the unique color of $[q] \setminus \{c(z) \mid z \in S_1\}$. Further, either two vertices of $S_2$ are assigned by $c$ the same color, or $c(u_2)$ is the unique color of $[q] \setminus \{c(z) \mid z \in S_2\}$. Using $c(u_1) \neq c(u_2)$, we obtain that in both cases it holds that $\{c(z) \mid z \in S_1\} \neq \{c(z) \mid z \in S_2\}$, so we are done.

For the converse implication, assume that there exists a proper $q$-coloring $c:X \rightarrow [q]$ of $G[X]$ satisfying the two conditions of the lemma. We extend this coloring to a proper $q$-coloring of $G$ as follows.
The graph $G \setminus X$ is a disjoint union of edges and isolated vertices.
We first claim that for every vertex $v \in V \setminus X$, there exists a color that does not appear on its neighbors in $X$. Indeed, otherwise there would exist a set $S \subseteq X$ of size $|S|=q$ with $S \subseteq N_G(v)$ such that $\{c(z) \mid z \in S\} = [q]$, in contradiction to the first condition of the lemma.
Therefore, every isolated vertex of $G \setminus X$ can be assigned a color that does not appear on its neighbors (which all lie in $X$).
Consider now an edge $\{u_1,u_2\}$ of $G \setminus X$. As shown above, for each of $u_1$ and $u_2$ there exists a color that does not appear on its neighbors in $X$. If for either $u_1$ or $u_2$ there exist two colors that do not appear on its neighbors in $X$, then it is possible to assign to $u_1$ and $u_2$ distinct colors that differ from the colors of their neighbors. Otherwise, we assign to each of them the unique color of $[q]$ that does not appear on its neighbors in $X$. The colors of $u_1$ and $u_2$ are distinct, because otherwise there would exist two sets $S_1,S_2 \subseteq X$ of size $|S_1|=|S_2|=q-1$ with $S_1 \subseteq N_G(u_1)$ and $S_2 \subseteq N_G(u_2)$, such that $\{c(z) \mid z \in S_1\} = \{c(z) \mid z \in S_2\}$, in contradiction to the second condition of the lemma. The proof is now complete.
\end{proof}

We are ready to prove Theorem~\ref{thm:main}.

\begin{proof}[ of Theorem~\ref{thm:main}]
Fix an integer $q \geq 3$.
Let $\Fset$ be some field for which there exists a $q$-palette $C$ over $\Fset$, where field operations and Gaussian elimination can be performed in polynomial time.
For concreteness, in light of Lemma~\ref{lemma:palette_exists}, $\Fset$ can be chosen as the field $\Fset_3$ of order $3$.

The input of the $q\qCol$ problem on $\IndEdge+k\mathrm{v}$ graphs consists of a graph $G=(V,E)$ and a set $X \subseteq V$ of size $|X|=k$, such that $G \setminus X$ is a disjoint union of edges and isolated vertices. Consider the kernelization algorithm that given such an input acts as follows.
\begin{enumerate}
  \item Initialize $G' = (V',E') \leftarrow G[X]$.
  \item Associate with each vertex $v \in X$ a $q$-dimensional vector $x_v$ of variables over $\Fset$. Note that the total number of variables is $q \cdot k$.
  \item\label{step:q-1} For every vertex $v \in V \setminus X$ and for every set $S \subseteq X$ of size $|S|=q$ such that $S \subseteq N_G(v)$, apply Lemma~\ref{lemma:poly_deg_q-1} to obtain a polynomial $f_{v,S}$ of degree $q-1$, defined on the $q^2$ variables of the vectors $x_v$ with $v \in S$ (with the vertices ordered lexicographically) with respect to the palette $C$.
      Let $P_1$ denote the collection of all polynomials obtained in this manner, and compute a basis $P'_1 \subseteq P_1$ of the subspace of polynomials spanned by $P_1$.
      For each polynomial $f_{v,S} \in P'_1$, update the graph $G'$ as follows.
      \begin{itemize}
        \item $V' \leftarrow V' \cup \{v\}$.
        \item $E' \leftarrow E' \cup \{ \{v,z\} \mid z \in S\}$.
      \end{itemize}
  \item\label{step:2q-3} For every edge $\{u_1,u_2\}$ of $G \setminus X$ and for every two (not necessarily disjoint) sets $S_1,S_2 \subseteq X$ of size $|S_1|=|S_2|=q-1$ such that $S_1 \subseteq N_G(u_1)$ and $S_2 \subseteq N_G(u_2)$, apply Lemma~\ref{lemma:poly_deg_2q-3} to obtain a polynomial $h_{u_1,u_2,S_1,S_2}$ of degree at most $2q-3$, defined on the $q \cdot (2q-2)$ variables of the vectors $x_v$ with $v \in S_1$ followed by the vectors $x_v$ with $v \in S_2$ (with the vertices in each set ordered lexicographically) with respect to the palette $C$.
      Let $P_2$ denote the collection of all polynomials obtained in this manner, and compute a basis $P'_2 \subseteq P_2$ of the subspace of polynomials spanned by $P_2$.
      For each polynomial $h_{u_1,u_2,S_1,S_2} \in P'_2$, update the graph $G'$ as follows.
      \begin{itemize}
        \item $V' \leftarrow V' \cup \{u_1,u_2\}$.
        \item $E' \leftarrow E' \cup \{ \{u_1,u_2\}\} \cup \{ \{u_1,z\} \mid z \in S_1\} \cup \{ \{u_2,z\} \mid z \in S_2\}$.
      \end{itemize}
  \item Return the graph $G' = (V',E')$ and the set $X$.
\end{enumerate}

Consider an input of the $q\qCol$ problem on $\IndEdge+k\mathrm{v}$ graphs, namely, a graph $G=(V,E)$ and a set $X \subseteq V$ of size $|X|=k$. The description of the algorithm implies that the returned graph $G'=(V',E')$ is a subgraph of $G$ with $X \subseteq V'$. Since $G \setminus X$ is a disjoint union of edges and isolated vertices, so is $G' \setminus X$, hence the output $(G',X)$ of the algorithm is valid.

We now turn to analyzing the size of the algorithm's output.
The algorithm initializes the graph $G'$ as the subgraph $G[X]$ with $k$ vertices and then incorporates vertices and edges of $G$ into $G'$.
To construct the collection of polynomials $P_1$ in Step~\ref{step:q-1} of the algorithm, we consider all pairs of a vertex from $V \setminus X$ and a $q$-subset of $X$, whose number is bounded by $|V| \cdot \binom{k}{q} \leq |V|^{q+1}$. The polynomials of $P_1$ lie in the vector space of polynomials on $q \cdot k$ variables over $\Fset$ with degree at most $q-1$. The dimension of the latter does not exceed $(q \cdot k)^{q-1}+1 = O(k^{q-1})$. Therefore, it is possible to find a basis $P'_1 \subseteq P_1$ for the subspace spanned by $P_1$ by performing Gaussian elimination over $O(k^{q-1})$ variables, and it holds that $|P'_1| =O(k^{q-1})$.

Similarly, to construct the collection of polynomials $P_2$ in Step~\ref{step:2q-3} of the algorithm, we consider all triples consisting of an edge in $G \setminus X$ and two $(q-1)$-subsets of $X$, with their number bounded by $|V|^2 \cdot {\binom{k}{q-1}}^2 \leq |V|^{2q}$. The polynomials of $P_2$ lie in the vector space of polynomials on $q \cdot k$ variables over $\Fset$ with degree at most $2q-3$, whose dimension is at most $(q \cdot k)^{2q-3}+1 = O(k^{2q-3})$. Therefore, it is possible to find a basis $P'_2 \subseteq P_2$ for the subspace spanned by $P_2$ by performing Gaussian elimination over $O(k^{2q-3})$ variables, and it holds that $|P'_2| =O(k^{2q-3})$.

For each polynomial in $P'_1$, the algorithm adds to $G'$ at most one vertex and up to $q$ edges (see Step~\ref{step:q-1}). Further, for each polynomial in $P'_2$, the algorithm adds to $G'$ at most two vertices and up to $2 \cdot (q-1)+1=2q-1$ edges (see Step~\ref{step:2q-3}).
In total, the number of vertices in $G'$ satisfies
\[|V'| \leq |X|+|P'_1|+2 \cdot |P'_2| \leq k+O(k^{q-1})+O(k^{2q-3}) \leq O(k^{2q-3}),\]
and the number of its edges satisfies
\[|E'| \leq \binom{|X|}{2}+q \cdot |P'_1|+(2q-1) \cdot |P'_2| \leq O(k^2)+O(k^{q-1})+O(k^{2q-3}) \leq O(k^{2q-3}).\]
We therefore obtain that the number of bits required to encode the edges of the graph $G'$ is at most $O(|E'| \cdot \log |V'|) \leq O(k^{2q-3} \cdot \log k)$.
Recalling that Gaussian elimination over the field $\Fset$ can be performed in polynomial time, we conclude that the algorithm can be implemented in polynomial time.

It remains to prove the correctness of the kernel, namely, that the input graph $G$ is $q$-colorable if and only if the output graph $G'$ is $q$-colorable.
As already mentioned, $G'$ is a subgraph of $G$, so if $G$ is $q$-colorable, then obviously so is $G'$. We turn to proving the converse.

Suppose that $G'$ is $q$-colorable.
Since $G' \setminus X$ is a disjoint union of edges and isolated vertices, we can apply Lemma~\ref{lemma:color_G[X]} to obtain that there exists a proper $q$-coloring $c:X \rightarrow C$ of $G'[X]$, with our $q$-palette $C$ over $\Fset$ being the color set, such that
\begin{enumerate}
  \item\label{itm:1} for every vertex $v \in V' \setminus X$ and for every set $S \subseteq X$ of size $|S|=q$ with $S \subseteq N_{G'}(v)$, there exist distinct vertices $z_1,z_2 \in S$ such that $c(z_1)=c(z_2)$, and
  \item\label{itm:2} for every edge $\{u_1,u_2\}$ of $G' \setminus X$ and for every two sets $S_1,S_2 \subseteq X$ of size $|S_1|=|S_2|=q-1$ with $S_1 \subseteq N_{G'}(u_1)$ and $S_2 \subseteq N_{G'}(u_2)$, either there exist distinct vertices $z_1,z_2 \in S_1$ such that $c(z_1)=c(z_2)$, or $\{c(z) \mid z \in S_1\} \neq \{c(z) \mid z \in S_2\}$.
\end{enumerate}
Now, for each vertex $v \in X$, we assign its color $c(v) \in \Fset^q$ to the $q$-dimensional vector $x_v$ associated with $v$. We claim that all the polynomials in $P_1 \cup P_2$ vanish on this assignment. Since $P'_1$ and $P'_2$ span, respectively, the polynomials of $P_1$ and $P_2$, it suffices to show that the polynomials in $P'_1 \cup P'_2$ vanish on the defined assignment.
\begin{itemize}
  \item Consider a polynomial $f_{v,S} \in P'_1$, and let $M=(a_1, \ldots, a_q) \in \Fset^{q \times q}$ be the $C$-colored matrix whose columns are the colors assigned by $c$ to the vertices of $S$ (ordered lexicographically). By Step~\ref{step:q-1} of the algorithm, the graph $G'$ includes the vertex $v$ and an edge from $v$ to each vertex of $S$. By Item~\ref{itm:1} above, there exist two distinct vertices in $S$ that are assigned by $c$ the same color, that is, $a_i = a_j$ for some distinct $i,j \in [q]$. By Lemma~\ref{lemma:poly_deg_q-1}, this implies that $f_{v,S}(M)=0$.
  \item Consider a polynomial $h_{u_1,u_2,S_1,S_2} \in P'_2$, and let $M=(a_1, \ldots, a_{q-1},b_1,\ldots,b_{q-1}) \in \Fset^{q \times (2q-2)}$ be the $C$-colored matrix whose columns are the colors assigned by $c$ to the vertices of $S_1$ followed by those of $S_2$ (ordered lexicographically in each set). By Step~\ref{step:2q-3} of the algorithm, the graph $G'$ includes the adjacent vertices $u_1,u_2$ and an edge from $u_1$ and $u_2$ to, respectively, each vertex of $S_1$ and $S_2$. By Item~\ref{itm:2} above, either there exist distinct $i,j \in[q-1]$ such that $a_i=a_j$, or $\{a_1,\ldots,a_{q-1}\} \neq \{b_1, \ldots, b_{q-1}\}$. By Lemma~\ref{lemma:poly_deg_2q-3}, this implies that $h_{u_1,u_2,S_1,S_2}(M)=0$.
\end{itemize}

We next show that the proper $q$-coloring $c$ of $G[X]$ satisfies the two conditions of Lemma~\ref{lemma:color_G[X]} with respect to the graph $G$.
\begin{itemize}
  \item Consider a vertex $v \in V \setminus X$ and a set $S \subseteq X$ of size $|S|=q$ with $S \subseteq N_G(v)$. By the definition of the collection $P_1$, it includes the polynomial $f_{v,S}$ defined as in Lemma~\ref{lemma:poly_deg_q-1} on the $q^2$ variables associated with the vertices of $S$ with respect to the palette $C$. Consider the $C$-colored matrix $M=(a_1, \ldots, a_q) \in \Fset^{q \times q}$, whose columns are the colors assigned by $c$ to the vertices of $S$. As shown above, it holds that $f_{v,S}(M)=0$, so by Lemma~\ref{lemma:poly_deg_q-1}, there exist distinct $i,j \in [q]$ such that $a_i = a_j$. This implies that two distinct vertices in $S$ share a color, as required for the first condition of Lemma~\ref{lemma:color_G[X]}.
  \item Consider an edge $\{u_1,u_2\}$ of $G \setminus X$ and two sets $S_1,S_2 \subseteq X$ of size $|S_1|=|S_2|=q-1$ with $S_1 \subseteq N_G(u_1)$ and $S_2 \subseteq N_G(u_2)$. By the definition of the collection $P_2$, it includes the polynomial $h_{u_1,u_2,S_1,S_2}$ defined as in Lemma~\ref{lemma:poly_deg_2q-3} on the $q \cdot (2q-2)$ variables associated with the vertices of $S_1$ followed by the vertices of $S_2$ with respect to the palette $C$. Consider the $C$-colored matrix $M=(a_1, \ldots, a_{q-1},b_1,\ldots,b_{q-1}) \in \Fset^{q \times (2q-2)}$, whose columns are the colors assigned by $c$ to the vertices of $S_1$ and $S_2$. As shown above, it holds that $h_{u_1,u_2,S_1,S_2}(M)=0$, so by Lemma~\ref{lemma:poly_deg_2q-3}, either there exist distinct $i,j \in [q-1]$ such that $a_i = a_j$, or it holds that $\{a_1,\ldots,a_{q-1}\} \neq \{b_1, \ldots, b_{q-1}\}$. This implies that either two distinct vertices in $S_1$ share a color, or the sets of colors of $S_1$ and $S_2$ are different, as required for the second condition of Lemma~\ref{lemma:color_G[X]}.
\end{itemize}
Finally, applying Lemma~\ref{lemma:color_G[X]}, we obtain that $G$ is $q$-colorable and complete the proof.
\end{proof}

\section*{Acknowledgments}
We thank the anonymous referees for their valuable suggestions.

\bibliographystyle{abbrv}
\bibliography{col_ind_kernel}

\end{document}